\let\accentvec\vec
\let\vec\accentvec
 \newcommand\ForAuthors[1]
\newcommand{\comment}[1]{}
\newcommand{\assale}[1]{\textbf{{\color{green}#1}}}
\newcommand{\ploc}[1]{\textbf{{\color{blue}#1}}}
\newcommand{\K}{\mathcal K}
\newcommand{\nn}{\mathbb N}
\newcommand{\rr}{\mathbb R}
\newcommand{\rd}{\rr^d}
\def\norm#1{\mbox{$\| #1 \|$}}
\newcommand{\well}{well-representative }
\newcommand{\prop}[1]{\mathcal{P}_{#1}}
\newcommand{\pws}{\mathcal{S}}
\def\sizesmallfig{0.19}
\newcommand{\xlive}{X^0}
\newcommand{\rea}{\mathfrak{R}}
\newcommand{\state}{\mathcal{X}}
\newcommand{\laws}{\mathcal{L}}
\newcommand{\xin}{X^{\mathrm{in}}}
\newcommand{\ind}{\mathcal{I}}
\begin{document}
\title{Property-based Polynomial Invariant Generation using
    Sums-of-Squares Optimization
}
\author{Assal\'e Adj\'e\inst{1}$^{,a}$ and Pierre-Loïc Garoche\inst{1}$^{,a}$ and Victor Magron\inst{2}$^{,b}$}
\institute{
Onera, the French Aerospace Lab, France.\\
Universit\'e de Toulouse, F-31400 Toulouse, France.
\\
\email{firstname.lastname@onera.fr}
\and
Circuits and Systems Group, Department of Electrical and Electronic Engineering,
Imperial College London, South Kensington Campus, London SW7 2AZ, UK.\\
\email{v.magron@imperial.ac.uk}
}
\renewcommand{\thefootnote}{\alph{footnote}}
\footnotetext[1]{The author is supported by the RTRA /STAE Project BRIEFCASE and the ANR ASTRID
VORACE Project.}
\footnotetext[2]{The author is supported by EPSRC (EP/I020457/1) Challenging Engineering Grant.}
\maketitle
\begin{abstract}
  While abstract interpretation is not theoretically restricted to specific
  kinds of properties, it is, in practice, mainly developed to compute linear
  over-approximations of reachable sets, aka. the collecting semantics of the
  program. The verification of user-provided properties is not easily compatible
  with the usual forward fixpoint computation using numerical abstract domains.

  We propose here to rely on sums-of-squares programming to characterize a property-driven 
  polynomial invariant. 
  This invariant generation can be guided
  by either boundedness, or in contrary, a given
  zone of the state space to avoid. 

  While the target property is not necessarily
  inductive with respect to the program semantics, our method identifies a
  stronger inductive polynomial invariant using numerical optimization. Our
  method applies to a wide set of programs: a main while loop composed of a 
  disjunction (if-then-else) of polynomial updates e.g. piecewise polynomial 
  controllers. It has been evaluated on various programs.

\end{abstract}


\section{Introduction}

With the increased need for confidence in software, it becomes more than ever
important to provide means to support the verification of specification of
software. Among the various formal verification methods to support these
analysis, a first line of approaches, such as deductive methods or SMT-based
model checking, provide rich languages to support the expression of the
specification and then try to discharge the associate proof obligation using
automatic solvers. The current state of the art of these solvers is able to
manipulate satisfiability problems over linear arithmetics or restricted
fragments of non linear arithmetics. Another line of approaches, such as static
analysis also known as abstract interpretation, restricts, a priori, the kind of
properties considered during the computation: these methods typically perform
interval arithmetic analysis or rely on convex polyhedra computations. In
practice this second line of work seems more capable of manipulating and
generating numerical invariants through the computation of inductive
invariants, while the first line of approaches hardly synthesize these required
invariants through satisfiability checks.

However, when it comes to more than linear properties, the state of the art is
not well developed. In the early 2000s, ellipsoid
analyses~\cite{DBLP:conf/esop/Feret04}, similar to restricted cases of Lyapunov
functions, were designed to support the study of a family of Airbus
controllers. This exciting result was used to provide the analysis of absence of
runtime errors but could hardly be adapted to handle more general user provided
specifications for polynomial programs.

However proving polynomial inequalities is
NP-hard and boils down to show that the infimum of a given polynomial is
nonnegative.  Still, one can obtain lower bounds of such infima by decomposing
certain nonnegative polynomials into sums-of-squares (SOS).  This actually leads
to solve hierarchies of semidefinite relaxations, introduced by Lasserre
in~\cite{Las01moments}. Recent advances in semidefinite programming allowed to
extensively apply these relaxations to various fields, including parametric
polynomial optimization, optimal control, combinatorial optimization, {\em etc.}
(see e.g.~\cite{Parrilo2003relax,LaurentSurvey} for more details).

While these approaches were
mentioned a decade ago in~\cite{CousotSDP} and mainly applied to termination
analysis, they hardly made their way through the software verification
community to address more general properties.

\if{
\ploc{Meme si les
  realted works sont a la fin, est-ce qu'on ne pourrait pas mentionner
  brievement le VMCAI05 de cousot ici? Par ex: }
}\fi
\if{
Template domains were introduced by Sankaranarayanan et al.~\cite{Sriram1}, see also~\cite{Sriram2}.
The latter authors only considered a finite set of linear templates and did not provide an automatic method 
to generate templates. 
Linear template domains were generalized to nonlinear quadratic cases by Adj\'e et 
al. in~\cite{DBLP:journals/corr/abs-1111-5223,DBLP:conf/esop/AdjeGG10}, where the authors used 
in practice quadratic Lyapunov templates for affine arithmetic programs. These templates are again not automatically 
generated. 
}\fi


\paragraph{Contributions.}
Our contribution allows to analyze high level properties defined as
a sublevel set of polynomials functions, i.e. basic semialgebraic
sets. This class of properties is rather large: it ranges
from boundedness properties to the definition of a bad region of the state space
to avoid.
While these properties, when they hold, are meant to be invariant, i.e. they
hold in each reachable state, they are
not necessarily inductive.  Our approach rely on the computation of a stronger inductive property
using SOS programming. This stronger property is proved inductive on
the complete system and, by construction, implies the target property specified
by the user.
\if{
We develop our analysis on discrete-time piecewise polynomial controllers,
capturing a wide class of critical programs, as typically found in current
 embedded systems such as aircraft. The current analysis is defined,
without loss of generality, at the system level. It can be easily applied at
code level on controllers; especially on the ones combining linear controllers,
saturation, anti-windup and other non linear constructs.
}\fi
We develop our analysis on discrete-time piecewise polynomial systems,
capturing a wide class of critical programs, as typically found in current
embedded systems such as aircrafts. 

\paragraph{Organization of the paper.}The paper is organized as follows. In Section~\ref{sem-functional}, we present the programs that we want to analyze and their representation as piecewise polynomial discrete-time systems. 
Next, we recall in Section~\ref{sec:invariant} the collecting semantics that we use and introduce the polynomial optimization problem providing inductive invariants based on target polynomial properties. 
Section~\ref{sec:sos} contains the main contribution of the paper, namely how to compute effectively such invariants with SOS programming. 
Practical computation examples are provided in Section~\ref{bench}. Finally, we explain in Section~\ref{sec:templates} how to derive template bases from generated invariants.


\section{Polynomial programs and piecewise polynomial discrete-time systems}
\label{sem-functional}
In this section, we describe the programs which are considered in this paper and we explain how to analyze them through their representation as piecewise polynomial discrete-time dynamical systems. 

We focus on programs composed of a single loop with a possibly complicated switch-case type loop body.
Moreover we suppose without loss of generality that the analyzed programs are written in Static Single Assignment (SSA) form, that is each variable is initialized at most once. 
\paragraph{Definitions.}
We recall that a function $f$ from $\rd$ to $\rr$ is a polynomial if and only if there exists $k\in\nn$, a family 
$\{c_\alpha\mid \alpha=(\alpha_1,\ldots,\alpha_d)\in\nn^d,\ |\alpha|=\alpha_1+\ldots+\alpha_d\leq k\}$ such that for all $x\in\rd$, $f(x)=\sum_{|\alpha|\leq k} c_{\alpha} x_1^{\alpha_1}\ldots x_d^{\alpha_d}$. By extension a function $f:\rd\mapsto \rd$ is a polynomial 
if and only if all its coordinate functions are polynomials. Let $\rr[x]$ stands for the set of $d$-variate polynomials.

In this paper, we consider assignments of variables using only {\em parallel polynomial assignments} $(x_1,\ldots,x_d)=T(x_1,\ldots,x_d)$ where $(x_1,\ldots,x_d)$ is the vector of the program variables. Tests are either weak polynomial inequalities $r(x_1,\ldots,x_d) \leq 0$ or strict polynomial inequalities $r(x_1,\ldots,x_d) < 0$. We assume that assignments are polynomials from $\rd$ to $\rd$ and test functions are polynomials from $\rd$ to $\rr$. In the program syntax, the notation $\ll$ will be either $\verb+<=+$ or $\verb+<+$. The form of the analyzed program is described in Figure~\ref{programstyle}. 
\begin{figure}[!ht]
\begin{center}
\begin{tabular}{|c|}
\hline
\if{
\begin{lstlisting}[mathescape=true]
x $\in$ $\xin$;
while ($r_1^0$(x)$\ll$0 and ... and $r_{n_0}^0$(x)$\ll$0){
  if($r_1^1$(x)$\ll$0){ 
     $\vdots$
     if($r_{n_1}^1$(x)$\ll$0){
        x = $T^1$(x);
     }   
     else{
        $\vdots$
        if($r_{n_i}^i$(x)$\ll$0){
           x = $T^i$(x);   
        }   
     }
  else{
      $\vdots$
  } 
}
\end{lstlisting}
}\fi
\begin{lstlisting}[mathescape=true]
x $\in$ $\xin$;
while ($r_1^0$(x)$\ll$0 and ... and $r_{n_0}^0$(x)$\ll$0){
  case ($r_1^1$(x)$\ll$0 and ... and $r_{n_1}^1$(x)$\ll$0): x = $T^1$(x);
  case ...
  case ($r_i^1$(x)$\ll$0 and ... and $r_{n_i}^1$(x)$\ll$0): x = $T^i$(x); 
}
\end{lstlisting}
\\
\hline
\end{tabular}
\end{center}
\if{
\ploc{Victor, tu ne veux pas plutot avoir qqchose qui ressemble à un pattern
  matching?\\
  $case r^1_1(x)\ll$ 0 and ... and $r^1_{n_1}(x)\ll:\\~\qquad x = T^1(x);\\
  case ...\\
  case  r^k_1(x)\ll$ 0 and ... and $r^k_{n_k}(x)\ll:\\~\qquad x = T^k(x);\\
  ...$\\
  Ca evite d'avoir ces cascades de if mais aussi le else final qu'en pratique on
  aura jamais. Est-ce qu'il ne faut pas preciser que I est fini? Ca n'a pas trop
de sens d'avoir un programme avec un nombre infini de partitions et donc d'updates.}
}\fi
\caption{One-loop programs with nested conditional branches}
\label{programstyle}
\end{figure}

A set $C\subseteq \rd$ is said to be a \emph{basic semialgebraic set} if there exist $g_1,\ldots,g_m \in \rr[x]$ such that $C=\{x\in \rd\mid g_j(x)\ll 0, \forall\, j=1,\ldots,m\}$, where $\ll$ is used to encode either a strict or a weak inequality.

As depicted in Figure~\ref{programstyle}, an update $T^i : \rd \to \rd$ of the $i$-th condition branch is executed if and only if the conjunction of tests $r_j^i(x)\ll 0$ holds. In other words, the variable $x$ is updated by $T^i(x)$ if the current value of $x$ belongs to the basic semialgebraic set 
\begin{equation}
\label{semialgebraic}
X^i:= \{x \in \rd \, | \,  \forall j = 1,\dots,n_i, \ r_j^i(x)\ll 0 \}\enspace.           
\end{equation}
\paragraph{Piecewise Polynomial Systems.} 
Consequently, we interpret programs as \emph{constrained piecewise polynomial discrete-time dynamical systems} (PPS for short). 
The term \emph{piecewise} means that there exists a partition $\{X^i,i\in \ind\}$ of $\rd$ such that for all $i\in \ind$, 
the dynamics of the system is represented by the following relation, for $k\in\nn$:
\begin{equation}
\label{pws}
\text{if } x_k\in X^i\cap X^0,\ x_{k+1}=T^i(x_k) \,.
\end{equation}

We assume that $\ind$ is finite and that the initial condition $x_0$ belongs to some compact basic semialgebraic set $\xin$. For the program, $\xin$ is the set where the variables are supposed to be initialized in. Since the test entry for the loop condition can be nontrivial, we add the term \emph{constrained} and $X^0$ denotes the set representing the conjunctions of tests for the loop condition. The iterates of the PPS are constrained to live in $X^0$: if for some step $k\in\nn$, $x_k\notin X^0$ then the PPS is stopped at this iterate with the terminal value $x_k$.

We define a partition as a family of nonempty sets such that:
\begin{equation}
\label{partition}
\bigcup_{i\in\ind} X^i=\rd,\ \forall\, i,j\in\ind,\ i\neq j, X^i\cap X^j\neq \emptyset \,.
\end{equation}
From Equation~\eqref{partition}, for all $k\in\nn^*$ there exists a unique $i\in\ind$ such that $x_k\in X^i$. A set $X^i$ can contain both strict and weak polynomial inequalities and characterizes the set of the $n_i$ conjunctions of tests polynomials $r_j^i$. Let $r^i=(r_1^i,\ldots,r_{n_i}^i)$ stands for the vector of tests functions associated to the set $X^i$. 
\if{
Moreover, for $X^i$, we denote by $r^{i,s}$ (resp. $r^{i,w}$) the part of $r^i$ corresponding to strict (resp. weak) inequalities. Finally, we obtain the representation of the set $X^i$ given by Equation~\eqref{semialgebraic}:
\begin{equation}
\label{semialgebraic}
X^i=\left\{x\in\rd \left| r^{i,s}(x) < 0,\ r^{i,w}(x)\leq 0\right\}\right. \,.
\end{equation}
We insist on the notation: $y< z$ (resp. $y_l<z_l$) means that for all coordinates $l$, $y_l<z_l$ (resp. $y_l\leq z_l$).
}\fi
We suppose that the basic semialgebraic sets $\xin$ and $X^0$ also admits the representation given by Equation~\eqref{semialgebraic} and we denote by $r^0$ the vector of tests polynomials $(r_1^0,\ldots,r_{n_0}^0)$ and by $r^{\mathrm{in}}$ the vector of test polynomials $(r_1^{\mathrm{in}},\ldots,r_{n_{\mathrm{in}}}^{\mathrm{in}})$. 
\if{
We also decompose $r^0$ and $r^{\mathrm{in}}$ as strict and weak inequality parts denoted respectively by $r^{0,s}$, $r^{0,w}$, $r^{\mathrm{in},s}$ and $r^{\mathrm{in},w}$.
}\fi
To sum up, we give a formal definition of PPS.
\begin{definition}[PPS]
\label{pwsdef}
A constrained polynomial piecewise discrete-time dynamical system (PPS) is the quadruple $(\xin,X^0,\state,\laws)$ with:
\begin{itemize}
\item $\xin\subseteq \rd$ is the compact basic semialgebraic set of the possible initial conditions;
\item $X^0\subseteq \rd$ is the basic semialgebraic set where the state variable lives;
\item $\state:=\{X^i, i\in \ind\}$ is a partition as defined in Equation~\eqref{partition};
\item $\laws:=\{T^i, i\in\ind\}$ is the family of the polynomials from $\rd$ to $\rd$, w.r.t. the partition $\state$ satisfying Equation~\eqref{pws}.
\end{itemize}   
\end{definition}
 
From now on, we associate a PPS representation to each program of the form described at Figure~\ref{programstyle}. 
Since a program admits several PPS representations, we choose one of them, but this arbitrary choice does not change the results 
provided in this paper. 
In the sequel, we will often refer to the running example described in Example~\ref{running}.
\begin{example}[Running example]
\label{running}
The program below involves four variables and contains 
an infinite loop with a conditional branch in the loop body.
The update of each branch is polynomial. The parameters $c_{i j}$ (resp.  $d_{i j}$) are given parameters.
During the analysis, we only keep the variables $x_1$ and $x_2$ since 
$oldx_1$ and $oldx_2$ are just memories.
\begin{center}
\begin{tabular}{c}
\begin{lstlisting}[mathescape=true]
$x_1, x_2\in [a_1, a_2] \times [b_1, b_2]$;
$oldx_1$ = $x_1$;
$oldx_2$ = $x_2$;
while (-1 <= 0){
  $oldx_1$ = $x_1$;
  $oldx_2$ = $x_2$;
  case : $oldx_1$^2 + $oldx_2$^2 <= 1 : 
      $x_1$ = $c_{11}$ * $oldx_1$^2 + $c_{11}$ * $oldx_2$^3;
      $x_2$ = $c_{21}$ * $oldx_1$^3 + $c_{22}$ * $oldx_2$^2;
  case :  -$oldx_1$^2 - $oldx_2$^2 < -1     
      $x_1$ = $d_{11}$ * $oldx_1$^3 + $d_{12}$ * $oldx_2$^2;
      $x_2$ = $d_{21}$ * $oldx_1$^2 + $d_{22}$ * $oldx_2$^2;
  } 
}
\end{lstlisting}
\end{tabular}
\end{center}
The associated PPS corresponds to the quadruple $(\xin,X^0,\{X^1,X^2\},\{T^1,T^2\})$, where the set of initial conditions is:
\[ 
\xin= [a_1, a_2] \times [b_1, b_2] \,,
\]
the system is not globally constrained, i.e. the set $X^0$ in which the variable $x=(x_1,x_2)$ lies is:
\[
X^0=\rd \,,
\]
the partition verifying Equation~\eqref{partition} is:
\[
X^1=\{x\in\rr^2\mid x_1^2+x_2^2\leq 1\},\quad X^2=\{x\in\rr^2\mid -x_1^2-x_2^2< -1\} \,,
\]
and the polynomials relative to the partition $\{X^1,X^2\}$ are:
\[
T^1(x)=\left(\begin{array}{c}
c_{1 1} x_1^2 + c_{1 2} x_2^3\\
c_{2 1} x_1^3 + c_{2 2} x_2^2
\end{array}\right)
\text{ and }  
T^2(x)=\left(\begin{array}{c}
d_{1 1} x_1^3 + d_{1 2} x_2^2\\
d_{2 1} x_1^2 + d_{2 2} x_2^2
\end{array}\right)\enspace.
\]
\end{example}


\section{Program invariants as sublevel sets}
\label{sec:invariant}

The main goal of the paper is to decide automatically if a given property holds
for the analyzed program, i.e. for all its reachable states. We are interested in
numerical properties and more precisely in properties on the values taken by the
$d$-uplet of the variables of the program. Hence, in our point-of-view, a
property is just the membership of some set $P\subseteq \rd$. In particular, we
study properties which are valid after an arbitrary number of loop
iterates. Such properties are called \emph{loop invariants} of the
program. Formally, we use the PPS representation of a given program and we say
that $P$ is a loop invariant of this program if:
\[
\forall\, k\in\nn,\ x_k\in P \,,
\] 
where $x_k$ is defined at Equation~\eqref{pws} as the state variable at step
$k\in\nn$ of the PPS representation of the program. 
Our approach addresses any
  property expressible as a polynomial level set property. This section defines
  formally these notions and develop our approach: synthesize a property-driven
  inductive invariant. 
\subsection{Collecting Semantics as postfixpoint characterization}
Now, let us consider a program of the form described in
Figure~\ref{programstyle} and let us denote by $\pws$ the PPS representation of
this program. The set $\rea$ of \emph{reachable values} is the set of all
possible values taken by the state variable along the running of $\pws$. We
define $\rea$ as follows:
\begin{equation}
\label{reachable}
\rea = \bigcup_{k \in \nn} T_{|_{\xlive}}^k(\xin)
\end{equation}
where $T_{|_{\xlive}}$ is the restriction of $T$ on $\xlive$ and
$T_{|_{\xlive}}$ is not defined outside $\xlive$.
To prove that a set $P$ is a loop invariant of the program is equivalent to prove that $\rea\subseteq P$. 
We can rewrite $\rea$ inductively:
\begin{equation}
\label{auxsemantics}
\rea=\xin\cup \bigcup_{i\in\ind} T^i\left(\rea\cap X^i\cap \xlive\right)\,.
\end{equation}
Let us denote by $\wp(\rd)$ the set of subsets of $\rd$ and introduce the map $F: \wp(\rr^d) \rightarrow \wp(\rr^d)$ defined by:
\begin{equation}
\label{transferfunctional}
F(C)
=\xin \cup \bigcup_{i\in\ind} T^i\left(C\cap X^i\cap X^0\right)
\end{equation}
We equip $\wp(\rr^d)$ with the partial order of inclusion. The infimum is understood in this sense i.e. as the greatest lower bound with respect to this order. The smallest fixed point problem is:
\begin{equation*}
\inf \left\{C\in\wp(\rr^d)\mid C=F(C) \right\}\,.
\end{equation*}
It is well-known from Tarski's theorem that the solution of this problem exists, is unique and in this case, it corresponds to $\rea$. Tarski's theorem also states that $\rea$ is the smallest solution of the following Problem:
\begin{equation*}
\inf \left\{C\in\wp(\rr^d)\mid F(C)\subseteq C\right\} \,.
\end{equation*}

\if{
We warn the reader that the construction of $F$ is completely determined by the
data of the PPS $\pws$. But for the sake of conciseness, we do not make it
explicit on the notations\ploc{~je ne vois pas ce que tu/vous voulez dire ici}.
}\fi 
Note also that the map $F$ corresponds to a standard transfer function (or
collecting semantics functional) applied to the PPS representation of a program.
We refer the reader to~\cite{CC77} for a seminal presentation of this approach.


To prove that a subset $P$ is a loop invariant, it suffices to show that $P$ satisfies $F(P)\subseteq P$. In this case, such $P$ is called 
\emph{inductive invariant}.
 
\subsection{Considered properties: sublevel properties $\prop{\kappa, \alpha}$}

In this paper, we consider special properties: those that are encoded with
sublevel sets of a given polynomial function.
\begin{definition}[Sublevel property]
\label{funproperty}
Given a polynomial function $\kappa \in \rr[x]$
and $\alpha \in \rr \cup \{ +\infty \}$, we define the sublevel property $\prop{\kappa, \alpha}$ as follows:
\[
\prop{\kappa, \alpha}:= \{x\in\rd\mid \kappa(x)\ll \alpha\} \,.
\]
where $\ll$ denotes $\leq$ when $\alpha \in \rr$ and denotes $<$ for
$+\infty$. The expression $\kappa(x) < +\infty$ expresses the boundedness of $\kappa(x)$ without
providing a specific bound $\alpha$.
\end{definition}

\begin{example}[Sublevel property examples]

\noindent\emph{Boundedness.} When one wants to bound the reachable values of a
system, we can try to bound the $l_2$-norm of the system: $\prop{\|\cdot \|_2^2,
\infty}$ with $\kappa(x) = \|x \|_2^2$. The use of $\alpha = \infty$ does not
impose any bound on $\kappa(x)$.

\noindent\emph{Safe set.} Similarly, it is possible to check whether a specific
bound is matched. Either globally using the $l_2$-norm and a specific $\alpha$: $\prop{\|\cdot \|_2^2,
\alpha}$, or bounding the reachable values of each variable: $\prop{\kappa_i,
\alpha_i}$ with $\kappa_i : x \mapsto x_i$ and $\alpha_i \in \rr$.

\noindent\emph{Avoiding bad regions.} If the bad region can be encoded as a sublevel
property $k(x) \leq 0$ then its negation $-k(x) \leq 0$ characterize the
avoidance of that bad zone. Eg. if one wants to prove that the square norm of
the program variables is always greater than 1, then we can consider the
property $\prop{\kappa, \alpha}$ with $\kappa(x)=1- \|x \|_2^2$ and $\alpha =
0$.  


\end{example}

A sublevel property is called \emph{sublevel invariant} when this property is a loop invariant.
This turns out to be difficult to prove loop invariant properties while considering directly $\rea$, thus we propose to find a more tractable over-approximation of $\rea$ for which such properties hold.


\if{
In the next, we are looking for a family of $k$ polynomials $\{p_1,\ldots,p_k\}$ such that:
\begin{equation}
\label{eqfondamentale}
\rea\subseteq \bigcap_{i=1}^k \{x\in\rd\mid p_i(x)\leq 0\} \subseteq \prop{\kappa,\alpha}
\end{equation}
}\fi

\subsection{Approach: compute a $\prop{\kappa, \alpha}$-driven inductive invariant $P$}
In this subsection, we explain how to compute a $d$-variate polynomial $p \in
\rr[x]$ and a bound $w \in \rr$, such that the polynomial sublevel sets $P := \{x\in\rd\mid p(x)\leq 0\}$ and $\prop{\kappa,w}$ satisfy:
\begin{equation}
\label{eqfondamentale}
\rea \subseteq P \subseteq \prop{\kappa,w} \subseteq \prop{\kappa,\alpha} \,.
\end{equation}
The first (from the left) inclusion forces $P$ to be valid for the whole reachable values set. The second inclusion constraints all elements of $P$ to satisfy the given sublevel property for a certain bound $w$. The last inclusion requires that the bound $w$ is smaller than the desired level $\alpha$. When $\alpha=\infty$, any bound $w$ ensures the sublevel property.  
%

Now, we derive sufficient conditions on $p$ and $w$ to satisfy Equation~\eqref{eqfondamentale}. We decompose the 
problem in two parts. To satisfy the first inclusion, i.e.~ensure that $P$
is a loop invariant, it suffices to guarantee that $F(P) \subseteq P$, namely that $P$ is 
an inductive invariant.
\if{
\[
F\left(\bigcap_{j=1}^k \{x\in\rd\mid p_j(x)\leq 0\}\right)\subseteq \bigcap_{j=1}^k \{x\in\rd\mid p_j(x)\leq 0\}
\]
}\fi
Using Equation~\eqref{auxsemantics}, $P$ is an inductive invariant if and only if:
\[
\xin \cup \bigcup_{i\in\ind} T^i\left(P \cap X^i\cap X^0\right)\subseteq P \,,
\]
or equivalently:
\begin{equation}
\label{eqinvarianttemplates}
\left\{
\begin{array}{l}
\displaystyle{\xin \subseteq P} \,,\\
\displaystyle{
\forall\, i\in\ind,\ 
T^i\left(P \cap X^i\cap X^0\right)\subseteq P  \,.
}
\end{array}
\right.
\end{equation}
%
Thus, we obtain:
\if{
\[
\sup_{x\in \xin} p(x)\leq 0
\text{ and }
\forall\, i\in\ind,\ \sup_{\substack{p_l(x)\leq 0,\ \forall\, l=1,\ldots, k\\ x\in \xlive,\ x\in X^i}} p_j(T^i(x))\leq 0
\]
}\fi
\begin{align}
\label{eq:inv1}
\left\{
\begin{array}{lrl}
& p(x) \leq 0 \,, & \quad \forall x \in \xin \,,\\
\forall\, i\in\ind \,, &  p \, (T^i(x))\leq 0  \,, & \quad \forall x \in P \cap X^i\cap X^0  \,.
\end{array}
\right.
\end{align}
\comment{
\ploc{Je ne vois pas la justification sur le decoupage en P et
  $P_{k,w}$. Est ce uniquement pour le cas où alpha est laissé libre? Dans le
  paragraphe qui suit, j'expliquerai plus en detail les deux inegalités (i)
  since $P \subseteq Pkw$, then any element of P satisfies the constraint defining
  Pkw, i.e. $k(x) \leq w$. (ii) furthermore since Pkw is included in Pka, then $w
  \leq a$. We then have ... }
  \assale{Plus haut, ça suffit maintenant?}
}
Now, we are interested in the second and third inclusions at Equation~\eqref{eqfondamentale} that is the sublevel property satisfaction. The condition $P \subseteq \prop{\kappa, w} \subseteq  \prop{\kappa, \alpha}$ can be formulated as follows:
\begin{equation}
\label{eq:inv2}
\kappa(x) \leq w \leq \alpha \,, \quad \forall x \in P \,.\\
\end{equation}
We recall that we have supposed that $P$ is written as $\{x\in\rd\mid p(x)\leq 0\}$ where $p\in \rr[x]$.
Finally, we provide sufficient conditions to satisfy both~\eqref{eq:inv1} and~\eqref{eq:inv2}. Consider the following optimization problem:
\begin{align}
\label{eq:invsufficient}
\left\{
\begin{array}{rll}
\inf_{p \in \rr[x], w \in \rr} & \quad w  \,, & \\			 
\text{s.t.}  & \quad p(x) \leq 0 \,, & \quad \forall x \in \xin \,,\\
                  & \quad \forall\, i\in\ind \,,  p \, (T^i(x))\leq p(x)  \,, & \quad \forall x \in X^i\cap X^0  \,,\\
                  & \quad \kappa(x) \leq w + p(x) \,, & \quad \forall x \in \rd \,.
\end{array} 
\right.
\end{align}
\comment{
\ploc{On retrouve directement les deux premieres contraintes de (9). Mais pour
  la derniere contrainte? Pourquoi ce terme en p(x) ?}
}
%

We remark that $\alpha$ is not present in Problem~\eqref{eq:invsufficient}. Indeed, since we minimize $w$, either there exists 
a feasible $w$ such that $w\leq \alpha$ and we can exploit this solution or such $w$ is not available and we cannot conclude. However, 
from Problem~\eqref{eq:invsufficient}, we can extract $(p,w)$ and in the case where the optimal bound $w$ is greater than $\alpha$, we could use this solution with another method such as policy iteration~\cite{pisos}.  

\begin{lemma}
\label{lemma:invsufficient}
Let $(p, w)$ be any feasible solution of Problem~\eqref{eq:invsufficient} with
$w\leq \alpha$ or $w<\infty$ in the case of $\alpha=\infty$.
Then $(p, w)$ satisfies both~\eqref{eq:inv1} and~\eqref{eq:inv2} with $P := \{x\in\rd\mid p(x)\leq 0\}$. Finally, $P$ and $\prop{\kappa,w}$ satisfy Equation~\eqref{eqfondamentale}.
\end{lemma}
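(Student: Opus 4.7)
The plan is to verify each of the three inclusions in Equation~\eqref{eqfondamentale} by directly unpacking the three constraints of the optimization Problem~\eqref{eq:invsufficient} and combining them with the additional hypothesis on $w$. The argument is essentially a chain of implications, so the main task is bookkeeping rather than any deep insight; the only nontrivial step is the passage from ``$P$ is inductive'' to ``$\rea\subseteq P$'', which is delegated to Tarski's theorem as recalled after Equation~\eqref{transferfunctional}.

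First I would establish Condition~\eqref{eq:inv1}. The first line, $p(x)\leq 0$ for all $x\in\xin$, is copied verbatim from the first constraint of~\eqref{eq:invsufficient}. For the second line, I fix $i\in\ind$ and $x\in P\cap X^i\cap X^0$. Since $x\in X^i\cap X^0$, the second constraint of~\eqref{eq:invsufficient} yields $p(T^i(x))\leq p(x)$, and since $x\in P$ means $p(x)\leq 0$ by definition of $P$, the two inequalities combine to give $p(T^i(x))\leq 0$, as required.

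Next I would establish Condition~\eqref{eq:inv2}. For any $x\in P$, the third constraint of~\eqref{eq:invsufficient} gives $\kappa(x)\leq w+p(x)$, and $p(x)\leq 0$ (because $x\in P$), so $\kappa(x)\leq w$. The inequality $w\leq \alpha$ is precisely the hypothesis of the lemma (when $\alpha=+\infty$, the condition $w<\infty$ ensures $\prop{\kappa,w}\subseteq \prop{\kappa,+\infty}=\rd$ trivially).

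Finally, to conclude with the chain of inclusions in Equation~\eqref{eqfondamentale}, I would argue as follows. Condition~\eqref{eq:inv1} is exactly the reformulation of $\xin\cup\bigcup_{i\in\ind}T^i(P\cap X^i\cap X^0)\subseteq P$, i.e. $F(P)\subseteq P$, so $P$ is a postfixpoint of $F$. By Tarski's theorem (as stated just after Equation~\eqref{transferfunctional}), $\rea$ is the smallest such postfixpoint, hence $\rea\subseteq P$. The inclusion $P\subseteq \prop{\kappa,w}$ is the content of~\eqref{eq:inv2}, and $\prop{\kappa,w}\subseteq \prop{\kappa,\alpha}$ follows immediately from $w\leq \alpha$ (or vacuously when $\alpha=+\infty$) by the definition of sublevel properties. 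The only possible subtlety will be making sure the strict-versus-weak inequality convention in Definition~\ref{funproperty} is handled consistently when $\alpha=+\infty$, but this is purely notational.
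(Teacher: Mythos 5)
Your proof is correct and follows exactly the chain of reductions the paper itself sets up in Section~\ref{sec:invariant} (the paper leaves this lemma unproved, but its derivation of~\eqref{eq:inv1},~\eqref{eq:inv2} and~\eqref{eq:invsufficient} from Equation~\eqref{eqfondamentale} is precisely your argument run in reverse). Your handling of the inductiveness step via Tarski's postfixpoint characterization of $\rea$ and of the $\alpha=+\infty$ case matches the paper's conventions, so there is nothing to add.
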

In practice, we rely on sum-of-squares programming to solve a relaxed version of Problem~\eqref{eq:invsufficient}.



\section{Sums-of-Squares Programming for Invariant Generation}
\label{sec:sos}
We first recall some basic background about sums-of-squares certificates for polynomial optimization.
Let $\rr[x]_{2m}$ stands for the set of polynomials of degree at most $2 m$ and
$\Sigma[x] \subset \rr[x]$ be the cone of sums-of-squares (SOS) polynomials,
that is $\Sigma[x] := \{\,\sum_i q_i^2, \, \text{ with } q_i \in \rr[x] \,\}$. Our work will use the simple fact that 
for all $p\in\Sigma[x]$, then $p(x)\geq 0$ for all $x\in\rd$ i.e. $\Sigma[x]$ is a restriction of the set of the nonnegative polynomials.
%
For $q \in \rr[x]_{2m}$,  finding a SOS decomposition $q = \sum_i q_i^2$ valid
over $\rd$ is equivalent to solve the following matrix linear feasibility problem:
\begin{align}
\label{eq:sdp}
q(x) = b_m(x)^T \, Q \, b_m(x) \,, \quad \forall x \in \rd, \,
\end{align}
where $b_m(x) := (1, x_1, \dots, x_d, x_1^2,x_1 x_2,\dots, x_d^m)$ (the vector of all monomials in $x$ up to degree $m$) and $Q$ being a {\em semidefinite positive} matrix (i.e. all the eigenvalues of $Q$ are nonnegative). The size of $Q$ (as well as the length of $b_m$) is ${d + m \choose d}$. 

\begin{example}
 consider the bi-variate polynomial $q(x) := 1 + x_1^2 - 2 x_1 x_2 + x_2^2$. With $b_1 (x) = (1, x_1, x_2)$, one looks for a semidefinite positive matrix $Q$ such that the polynomial equality $q(x) = b_1(x)^T \, Q \, b_1(x)$ holds for all $x \in \rr^2$. 
The matrix \[Q = 
\begin{pmatrix}
1 & 0 & 0\\
0 & 1 & -1\\
0 & -1 & 1
\end{pmatrix}\] satisfies this equality and has three nonnegative eigenvalues,
which are 0, 1, and 2, respectively associated to the three eigenvectors $e_0 :=
(0, 1, 1)^\intercal $, $e_1 := (1, 0, 0)^\intercal $ and $e_2 := (0, 1, -1)^\intercal$. Defining the matrices $L
:= (e_1 \, e_2 \, e_0)=\begin{psmallmatrix}1 & 0 & 0\\ 0 & 1 & 1\\0 & -1 & 1 \end{psmallmatrix}$ and $D = \begin{psmallmatrix} 1 & 0 & 0 \\ 0 & 2 & 0 \\ 0 & 0 & 0 \end{psmallmatrix}$, one obtains the decomposition $Q = L^\intercal \, D \, L$ and the equality $q(x) = (L \,
b_1(x))^T \, D \, (L \, b_1(x)) = \sigma (x) = 1 + (x_1 - x_2)^2$, for all $x
\in \rr^2$. The polynomial $\sigma$ is called a {\em SOS certificate} and
guarantees that $q$ is nonnegative.
\end{example}
In practice, one can solve the general problem~\eqref{eq:sdp} by using semidefinite programming (SDP) solvers (e.g. {\sc Mosek}~\cite{mosek}, SDPA~\cite{Yamashita10SDPA}). For more details about SDP, we refer the interested reader to~\cite{Vandenberghe94semidefiniteprogramming}.

\if{
inf p
s.t.  kappa  <= p  partout
       p o Ti   <= p  on Xi
               p  <= 0   on X0
}\fi

One way to strengthen the three nonnegativity constraints of Problem~\eqref{eq:invsufficient} is 
to consider the following {\em hierarchy} of SOS programs, parametrized by the
integer $m$ representing the half of the degree of $p$:
\begin{align}
\label{polsynthesis}
\left\{
\begin{aligned}
\inf_{p\in \rr[x]_{2m}, w\in\rr} & \quad w \enspace, \\			 
\text{s.t.}  
& \quad - p = \sigma_0 - \sum_{j=1}^{n_{\mathrm{in}}} \sigma_j r_j^{\mathrm{in}}  \enspace , \\
& \quad \forall\, i\in\ind,\ \displaystyle{p-p \circ T^i= \sigma^i - \sum_{j=1}^{n_i}\mu_j^i r_j^i - \sum_{j=1}^{n_0}\gamma_j^i r_j^0}  \enspace , \\
& \quad \displaystyle{w + p -\kappa = \psi} \enspace , \\
&\\
& \quad \forall\, j=1,\ldots, n_{\mathrm{in}} \enspace,\ \sigma_j\in\Sigma[x]\enspace ,\ \deg (\sigma_j r_j^{\mathrm{in}})  \leq 2m\enspace,\\
& \quad \sigma_0\in\Sigma[x]\enspace ,\ \deg (\sigma_0)  \leq 2m\enspace,\\
& \quad \forall\, i\in\ind \enspace ,\ \sigma^i\in \Sigma[x]\enspace ,\ \deg (\sigma^i) \leq 2 m \deg T^i \enspace,\\
& \quad \forall\, i\in\ind \enspace ,\ \forall\, j=1,\ldots, n_i \enspace,\ \mu_j^i\in\Sigma[x]\enspace ,\ \deg (\mu_j^i r_j^i)  \leq 2 m \deg T^i \enspace ,\\
& \quad \forall\, i\in\ind\enspace ,\ \forall\, j=1,\ldots, n_0\enspace ,\ \gamma^i\in \Sigma[x]\enspace ,\ \deg (\gamma_j^i r_j^0)  \leq 2 m \deg T^i  \enspace ,\\
& \quad \psi \in \Sigma[x] \enspace,\ \deg (\psi) \leq 2 m  \enspace .\\
\end{aligned} \right.
\end{align}

\begin{proposition}
\label{sosproposition}
For a given $m \in \nn$, let $(p_m, w_m)$ be any feasible solution of Problem~\eqref{polsynthesis}. Then $(p_m, w_m)$ is also a feasible solution of Problem~\eqref{eq:invsufficient}.
Moreover, if $w_m\leq \alpha$ then both $P_m := \{x \in \rd \mid p_m(x) \leq 0\}$ and $\prop{\kappa,w_m}$ satisfy Equation~\eqref{eqfondamentale}.
\end{proposition}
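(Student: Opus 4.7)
The strategy is to show, constraint by constraint, that the SOS equalities in Problem~\eqref{polsynthesis} enforce the three sign conditions appearing in Problem~\eqref{eq:invsufficient}. The single fact used throughout is that an element of $\Sigma[x]$ is a nonnegative polynomial on $\rd$, together with the observation that whenever $\sigma\in\Sigma[x]$ and $r(x)\ll 0$, the product $\sigma(x)\,r(x)$ is $\leq 0$. Once Problem~\eqref{eq:invsufficient} is satisfied, the second half of the proposition follows immediately by invoking Lemma~\ref{lemma:invsufficient}.

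\textbf{First constraint ($p\leq 0$ on $\xin$).} Fix $x\in\xin$. By definition of $\xin$ we have $r_j^{\mathrm{in}}(x)\ll 0$ for every $j=1,\ldots,n_{\mathrm{in}}$. Since each $\sigma_j\in\Sigma[x]$ is nonnegative, each product $\sigma_j(x)\,r_j^{\mathrm{in}}(x)$ is $\leq 0$, hence $-\sum_{j=1}^{n_{\mathrm{in}}} \sigma_j(x) r_j^{\mathrm{in}}(x)\geq 0$. Combined with $\sigma_0(x)\geq 0$, the first equality of~\eqref{polsynthesis} yields $-p(x)\geq 0$ as required.

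\textbf{Second constraint ($p\circ T^i\leq 0$ on $P\cap X^i\cap X^0$).} I would first prove the pointwise inequality $p(T^i(x))\leq p(x)$ on $X^i\cap X^0$: for such an $x$, both $r_j^i(x)\ll 0$ and $r_j^0(x)\ll 0$, so by the same sign argument the right-hand side of the second SOS equality is $\geq 0$, giving $p(x)-p(T^i(x))\geq 0$. Then, to recover the constraint of Problem~\eqref{eq:invsufficient}, I additionally use $x\in P$, i.e.\ $p(x)\leq 0$, whence $p(T^i(x))\leq p(x)\leq 0$. This is the only step that genuinely exploits the fact that $P$ is a sublevel set of $p$.

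\textbf{Third constraint ($\kappa\leq w+p$ on $\rd$) and conclusion.} From $w+p-\kappa=\psi\in\Sigma[x]$, we obtain $w+p(x)-\kappa(x)\geq 0$ for every $x\in\rd$, which is exactly the third constraint of~\eqref{eq:invsufficient}. Thus $(p_m,w_m)$ is feasible for~\eqref{eq:invsufficient}. If in addition $w_m\leq \alpha$ (or $w_m<\infty$ in the unbounded case), Lemma~\ref{lemma:invsufficient} applies verbatim to $(p_m,w_m)$ and delivers both inclusions $\rea\subseteq P_m$ and $P_m\subseteq \prop{\kappa,w_m}\subseteq \prop{\kappa,\alpha}$, which is the desired Equation~\eqref{eqfondamentale}. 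No real obstacle is expected: the only subtlety is bookkeeping the sign convention $r_j\ll 0$ on the semialgebraic sets, and remembering to combine the ``decrease'' inequality $p\circ T^i\leq p$ with the membership $x\in P$ to obtain the invariance inequality.
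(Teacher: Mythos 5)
Your proof is correct and follows essentially the same route as the paper's: read off each sign condition of Problem~\eqref{eq:invsufficient} from the corresponding SOS equality of Problem~\eqref{polsynthesis} using nonnegativity of $\Sigma[x]$ elements and the sign of the $r_j$'s on the relevant sets, then invoke Lemma~\ref{lemma:invsufficient}. The only (harmless) quirk is in your second step: the constraint of Problem~\eqref{eq:invsufficient} is just $p\circ T^i\leq p$ on $X^i\cap X^0$, so the extra argument using $x\in P$ re-derives part of what Lemma~\ref{lemma:invsufficient} already provides rather than being needed for feasibility.
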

\begin{proof}
\label{funslemma}
The feasible solution $(p_m, w_m)$ is associated with SOS certificates ensuring that the three equality constraints of Problem~\eqref{polsynthesis} hold: $\{\sigma_0, \sigma_j\}$ is associated to the first one, $\{\sigma^i,\mu_j^i,\gamma_j^i\}$ is associated to the second one and $\psi$ is associated to the third one. The first equality constraint, namely
\[  - p_m(x) = \sigma_0(x) - \sum_{j=1}^{n_{\mathrm{in}}} \sigma_j(x) r_j^{\mathrm{in}}(x)  \,, \quad \forall x \in \rd  \,, \] 
implies that $\forall x \in \xin \,, p_m(x)\leq 0$. Similarly, one has $\forall  i\in\ind,  \forall x \in X^i\cap X^0 , p_m \, (T^i(x))\leq p_m(x)$ and $\forall x \in \rd, \kappa(x) \leq w_m + p_m(x)$. Then $(p_m, w_m)$ is a feasible solution of Problem~\eqref{eq:invsufficient}. The second statement comes directly from Lemma~\ref{lemma:invsufficient}.
\if{
Let $h,g_1,\ldots,g_n \in \rr[x]$.
If there exists $\mu\in \Sigma[x]^n$ such that
\[
\sum_{i=1}^n \mu_i(x)g_i(x))-h(x)\in\Sigma[x]\enspace ,
\]
then 
\[
\forall x \in \rd \, , (g_1(x)\leq 0\wedge \ldots\wedge g_n(x)\leq 0 \implies h(x)\leq 0) \, .
\]
}\fi
\end{proof}

\paragraph*{Computational considerations.} Define $t := \max\{\deg T^i, i\in\ind \}$. At step $m$ of this hierarchy, the number of SDP variables is proportional to $\binom{d + 2 m t}{d}$ and the number of SDP constraints is proportional to $\binom{d + m t}{d}$. Thus, one expects tractable approximations when the number $d$ of variables (resp. the degree $2 m$ of the template $p$) is small. However, one can handle bigger instances of Problem~\eqref{polsynthesis} by taking into account the system properties. For instance one could exploit sparsity as in~\cite{Waki06sumsof} by considering the variable sparsity correlation pattern of the polynomials $\{T^i,i\in\ind\},\{r_j^i, i\in\ind, j=1,\ldots,n_i\},\{r_j^0,j=1,\ldots,n_0\},\{r_j^{\mathrm{in}},j=1,\ldots,n_{\mathrm{in}}\}$ and $\kappa$. 


\section{Benchmarks}

Here, we perform some numerical experiments while solving Problem~\eqref{polsynthesis} (given in Section~\ref{sec:sos}) on several examples. 
In Section~\ref{benchbound}, we verify that the program of Example~\ref{running} satisfies some boundedness property. We also provide examples involving higher dimensional cases. Then, Section~\ref{benchsafe} focuses on other properties, such as checking that the set of variable values avoids an unsafe region. Numerical experiments are performed on an Intel Core i5 CPU ($2.40\, $GHz) with {\sc Yalmip} being interfaced with the SDP solver {\sc Mosek}. 
\label{bench}
\subsection{Checking boundedness of the set of variables values}
\label{benchbound}
\begin{example}
\label{ex:test}
Following Example~\ref{running}, we consider the constrained piecewise discrete-time dynamical system $\pws=(\xin,X^0,\{X^1,X^2\},\{T^1,T^2\})$  with $\xin = [0.9, 1.1] \times [0, 0.2] $, $X^0=\{x\in\rr^2\mid r^0(x)\leq 0\}$ with $r^0:x\mapsto -1$, 
$X^1=\{x\in\rr^2\mid r^1(x)\leq 0\}$ with $r^1:x\mapsto \norm{x}^2-1$, $X^2=\{x\in\rr^2\mid r^2(x)<0\}$
with $r^2=-r^1$ and $T^1:(x_1,x_2)\mapsto (c_{11}x_1^2+c_{12}x_2^3,c_{21}x_1^3+c_{22}x_2^2)$, 
 $T^2:(x_1,x_2) \mapsto (d_{11}x_1^3+d_{12}x_2^2,d_{21}x_1^2+d_{22}x_2^2)$. We are interested in showing that the boundedness property $\prop{\| \cdot \|_2^2,\alpha}$ holds for some positive $\alpha$. 
\end{example}
\begin{figure}[!ht]
\centering
\subfigure[$m = 3$]{
\includegraphics[scale=\sizesmallfig]{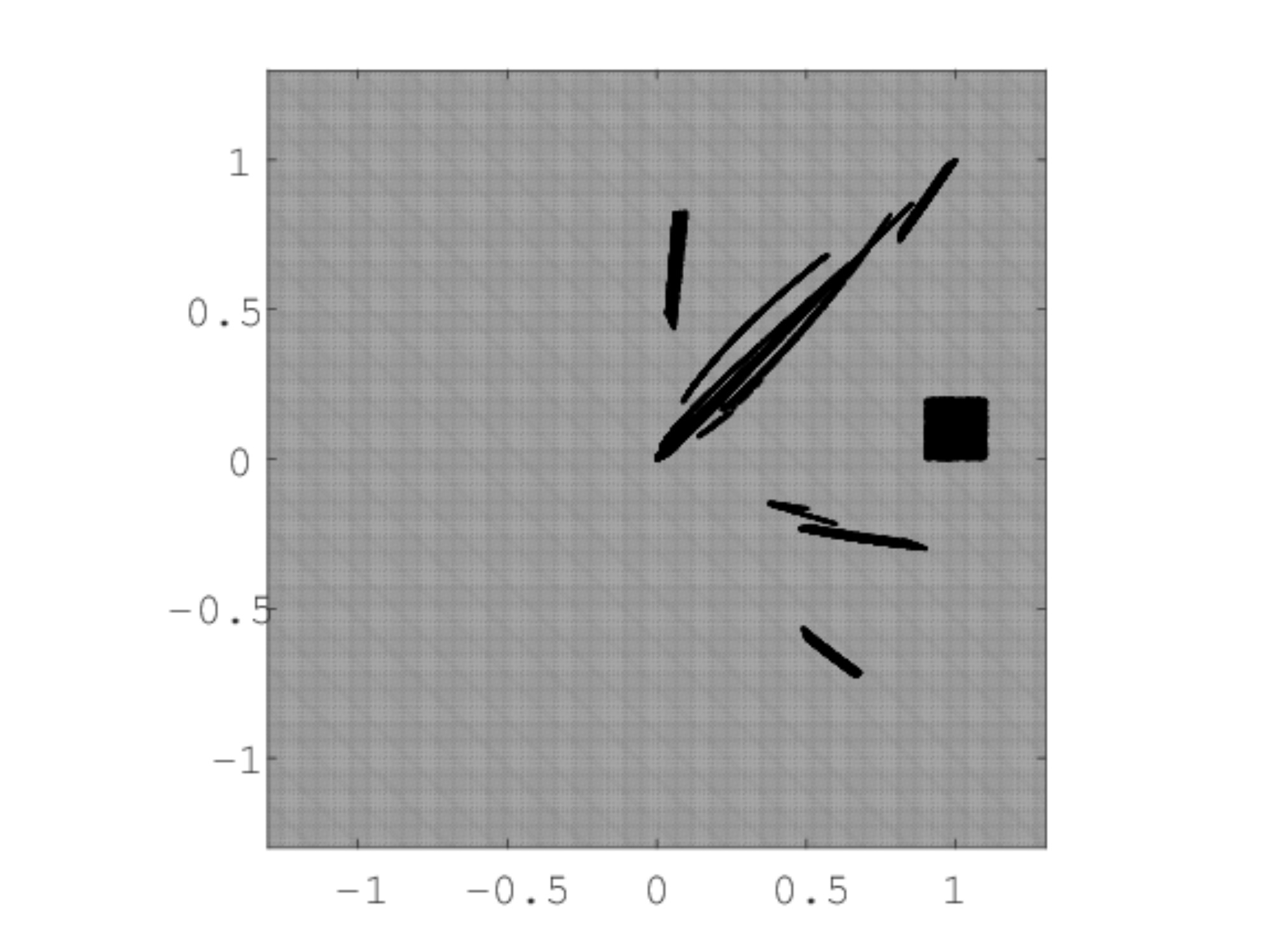}}
\subfigure[$m = 4$]{
\includegraphics[scale=\sizesmallfig]{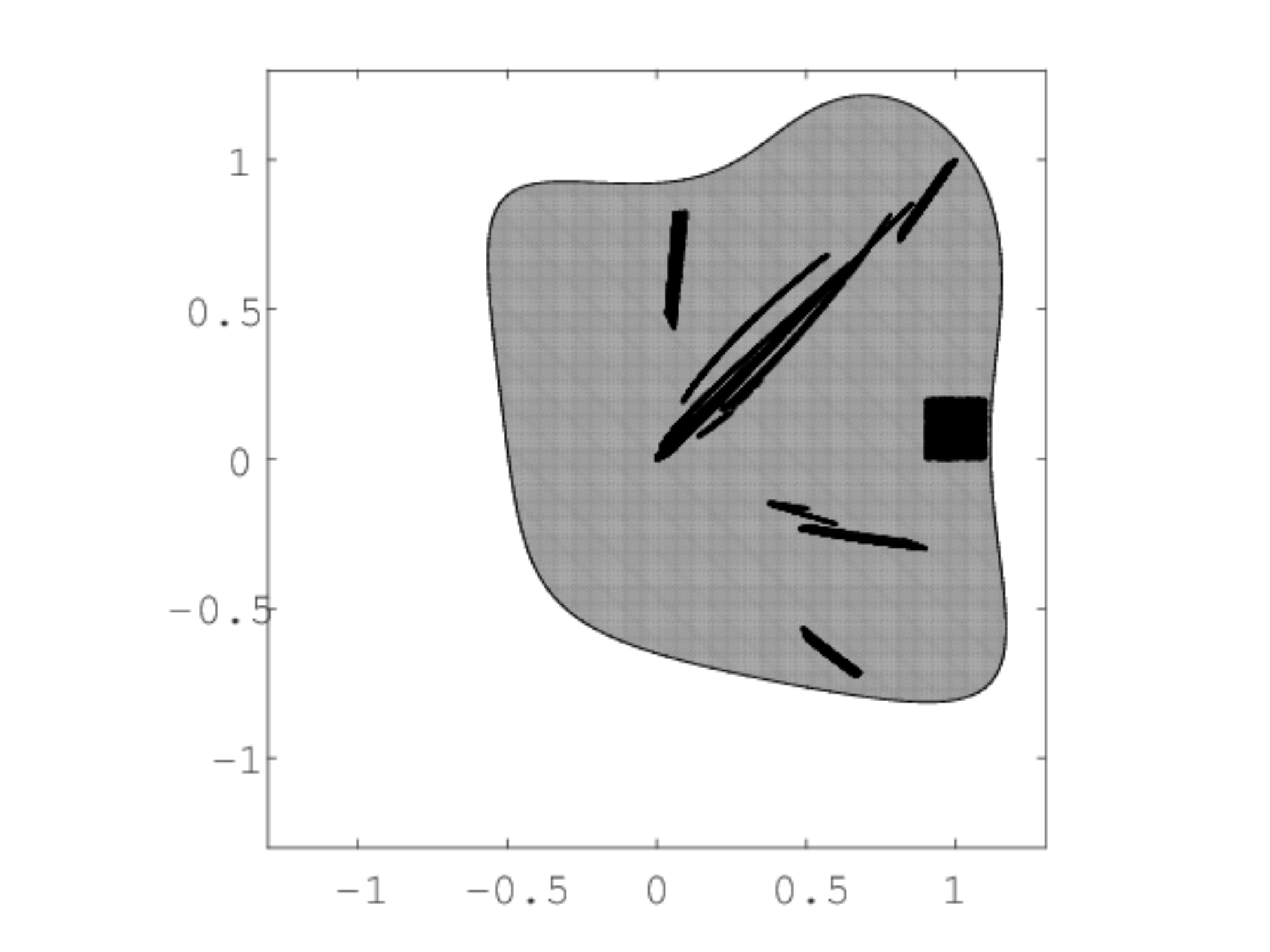}}
\subfigure[$m = 5$]{
\includegraphics[scale=\sizesmallfig]{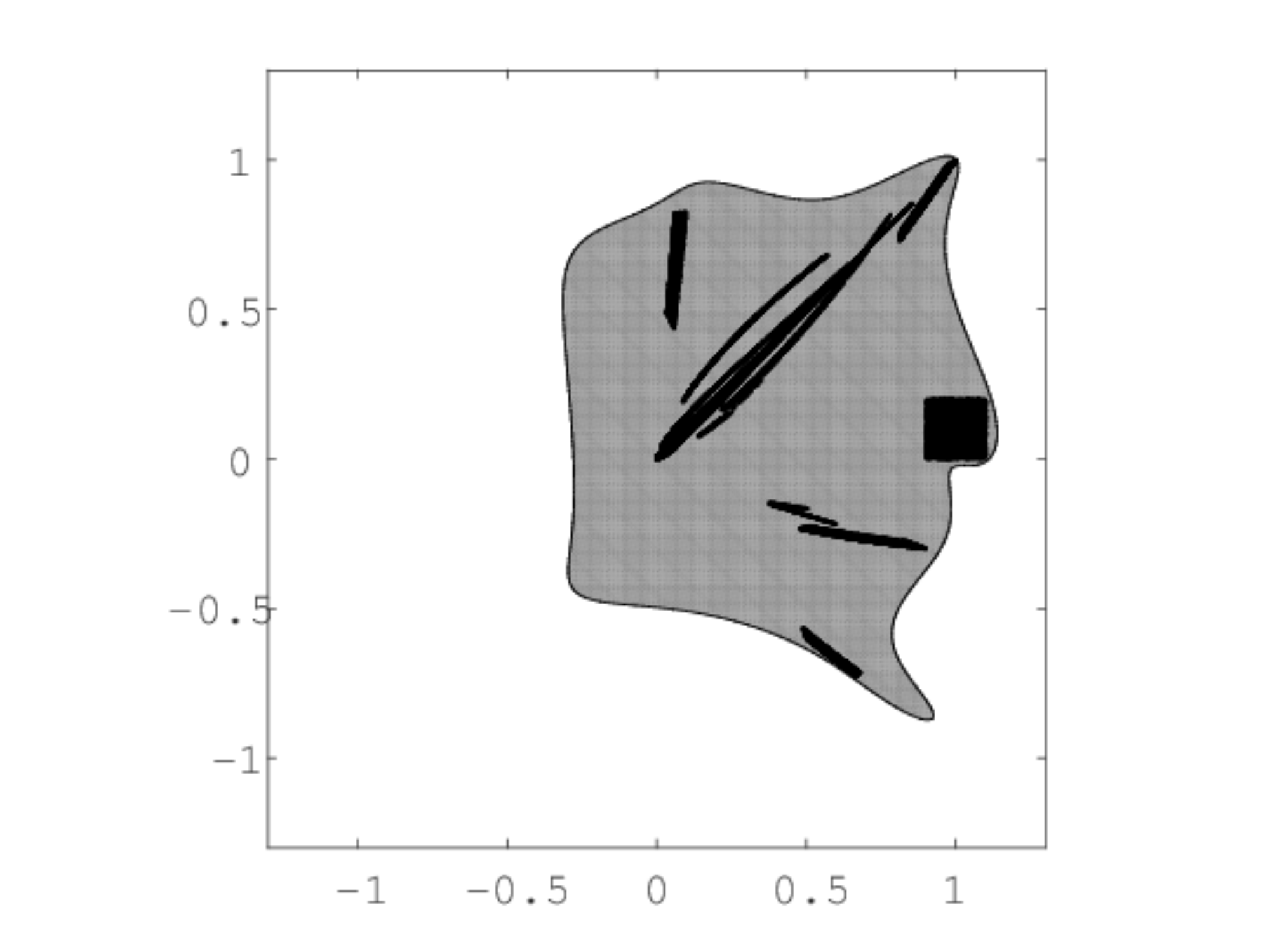}}
\caption{A hierarchy of sublevel sets $P_m$ for Example~\ref{ex:test}}	
\label{fig:test}
\end{figure}
Here we illustrate the method by instantiating the program of
Example~\ref{running} with the following input: $a_1 = 0.9$, $a_2 = 1.1$, $b_1 =
0$, $b_2 = 0.2$, $c_{11} = c_{12} = c_{21} = c_{22} = 1$, $d_{11} = 0.5$,
$d_{12} = 0.4$, $d_{21} = -0.6$ and $d_{22} = 0.3$.  We represent the possible
initial values taken by the program variables $(x_1, x_2)$ by picking uniformly
$N$ points $(x_1^{(i)}, x_2^{(i)}) \ (i = 1, \dots, N)$ inside the box $\xin =
[0.9, 1.1] \times [0, 0.2]$ (see the corresponding square of dots on
Figure~\ref{fig:test}). The other dots are obtained after successive updates of
each point $(x_1^{(i)}, x_2^{(i)})$ by the program of Example~\ref{running}. The
sets of dots in Figure~\ref{fig:test} are obtained with $N = 100$ and six
successive iterations.

At step $m = 3$, Program~\eqref{polsynthesis} yields a solution $(p_3, w_3) \in
\rr_6[x] \times \rr$ together with SOS certificates, which guarantee the
boundedness property, that is $x \in \rea\implies x \in P_3 := \{p_3 (x)
\leq 0 \} \subseteq \prop{\| \cdot \|_2^2,w_3} \implies \| x \|_2^2 \leq w_3$. One has $p_3(x) :=
-2.510902467-0.0050x_1-0.0148x_2+3.0998x_1^2-0.8037x_2^3
-3.0297x_1^3+2.5924x_2^2+1.5266x_1x_2-1.9133x_1^2x_2
-1.8122x_1x_2^2+1.6042x_1^4+0.0512x_1^3x_2-4.4430x_1^2
x_2^2-1.8926x_1x_2^3+0.5464x_2^4-0.2084x_1^5+0.5866x_1^4
x_2+2.2410x_1^3x_2^2+1.5714x_1^2x_2^3-0.0890x_1x_2^4-
0.9656x_2^5+0.0098x_1^6-0.0320x_1^5x_2-0.0232x_1^4x_2^2+
0.2660x_1^3x_2^3+0.7746x_1^2x_2^4+0.9200x_1x_2^5+0.6411
x_2^6$ (for the sake of conciseness, we do not display $p_4$ and $p_5$).

Figure~\ref{fig:test} displays in light gray outer approximations of the set of
possible values $X_1$ taken by the program of Example~\ref{ex:test} as follows:
(a) the degree six sublevel set $P_3$, (b) the degree eight sublevel set $P_4$
and (c) the degree ten sublevel set $P_5$. The outer approximation $P_3$ is
coarse as it contains the box $[-1.5, 1.5]^2$. However, solving
Problem~\eqref{polsynthesis} at higher steps yields tighter outer approximations
of $\rea$ together with more precise bounds $w_4$ and $w_5$ (see  the corresponding row in Table~\ref{table:benchcmp}).

We also succeeded to certify that the same property holds for higher dimensional programs, described in Example~\ref{ex:test3} ($d = 3$) and Example~\ref{ex:test4} ($d = 4$).
\begin{example}
\label{ex:test3}
Here we consider $\xin = [0.9, 1.1] \times [0, 0.2]^2 $, $r^0:x\mapsto -1$, 
$r^1: x \mapsto \| x \|_2^2-1$, $r^2=-r^1$, $T^1:(x_1,x_2,x_3)\mapsto 1/4 (0.8 x_1^2 + 1.4 x_2 - 0.5  x_3^2,   1.3  x_1 + 0.5  x_3^2,  1.4  x_2 + 0.8  x_3^2)$, 
 $T^2:(x_1,x_2, x_3) \mapsto 1/4(0.5  x_1 + 0.4  x_2^2,   -0.6  x_2^2 + 0.3  x_3^2,  0.5  x_3 + 0.4  x_1^2)$ and $\kappa : x \mapsto \| x \|_2^2$.
\end{example}
\begin{example}
\label{ex:test4}
Here we consider $\xin= [0.9, 1.1] \times [0, 0.2]^3 $, $r^0:x\mapsto -1$, 
$r^1: x \mapsto \| x \|_2^2-1$, $r^2=-r^1$, $T^1:(x_1,x_2,x_3, x_4)\mapsto 0.25 (0.8  x_1^2 + 1.4 x_2 - 0.5  x_3^2, 1.3  x_1 + 0.5,  x_2^2 - 0.8  x_4^2,         0.8  x_3^2 + 1.4 x_4,    1.3  x_3 + 0.5  x_4^2)$, 
 $T^2:(x_1,x_2, x_3, x_4) \mapsto 0.25 (0.5  x_1 + 0.4  x_2^2,   -0.6  x_1^2 + 0.3  x_2^2, 0.5  x_3 + 0.4  x_4^2,   -0.6  x_3 + 0.3  x_4^2)$ and $\kappa : x \mapsto \| x \|_2^2$.
\end{example}
Table~\ref{table:bench} reports several data obtained while solving Problem~\eqref{polsynthesis} at step $m$, ($2 \leq m \leq 5$), either for Example~\ref{ex:test}, Example~\ref{ex:test3} or Example~\ref{ex:test4}.
Each instance of Problem~\eqref{polsynthesis} is recast as a SDP program, involving a total number of ``Nb. vars'' SDP variables, with a SDP matrix of size ``Mat. size''. We indicate the CPU time required to compute the optimal solution of each SDP program with {\sc Mosek}.

The symbol ``$-$'' means that the corresponding SOS program could not be solved within one day of computation. These benchmarks illustrate the computational considerations mentioned in Section~\ref{sec:sos} as it takes more CPU time to analyze higher dimensional programs.
Note that it is not possible to solve Problem~\eqref{polsynthesis} at step $5$ for Example~\ref{ex:test4}. A possible workaround to limit this computational blow-up would be to exploit the sparsity of the system.

%
\renewcommand{\tabcolsep}{0.4cm}
\begin{table}[!ht]
\begin{center}
\caption{Comparison of timing results for Example~\ref{ex:test},~\ref{ex:test3} and~\ref{ex:test4}}
\begin{tabular}{c|c|cccc}
\hline
\multicolumn{2}{c|}{Degree $2 m$}
& 4 & 6 & 8 & 10
\\
\hline  
\multirow{2}{*}{Example \ref{ex:test}} & Nb. vars &  1513 & 5740 & 15705 & 35212 \\
& Mat. size & 368 & 802 & 1404 & 2174 \\
 ($d = 2$) & Time & $0.82 \, s$ & $1.35 \, s$ &  $4.00 \, s$ & $9.86 \, s$\\
\hline
\multirow{2}{*}{Example \ref{ex:test3}} & Nb. vars &  2115 & 11950 & 46461 & 141612\\
& Mat. size & 628 & 1860 & 4132 & 7764 \\
 ($d = 3$)& Time & $0.84 \, s$ & $2.98 \, s$ &  $21.4 \, s$ & $109 \, s$\\
\hline
\multirow{2}{*}{Example \ref{ex:test4}} & Nb. vars & 7202  & 65306 & 18480 & $-$\\
    & Mat. size & 1670 & 6622 & 373057 & $-$\\
 ($d = 4$) & Time & $2.85 \, s$ & $57.3 \, s$ &  $1534 \, s$ & $-$\\
\hline
\end{tabular}
\label{table:bench}
\end{center}
\end{table}
\renewcommand{\tabcolsep}{0.2cm}
\begin{table}[!ht]
\begin{center}
\caption{Hierarchies of bounds obtained for various properties}
\begin{tabular}{c|c|cccc}
\hline
\multicolumn{1}{c|}{Benchmark}
& $\kappa$ & $w_2$ & $w_3$ & $w_4$ & $w_5$
\\
\hline  
\multirow{1}{*}{Example \ref{ex:test}} & $\|\cdot\|_2^2$ & \multirow{1}{*}{639} & \multirow{1}{*}{17.4} & \multirow{1}{*}{2.44} & \multirow{1}{*}{2.02} \\
\multirow{1}{*}{Example \ref{ex:testout}} & $x \mapsto 0.25 - \|x + 0.5\|_2^2$ &  \multirow{1}{*}{0.25} & \multirow{1}{*}{0.249} & \multirow{1}{*}{0.0993} & \multirow{1}{*}{-0.0777} \\
\hline
\multirow{2}{*}{Example \ref{ex:ahmadi3}} &  $\|\cdot\|_2^2$ & \multirow{1}{*}{10.2} & \multirow{1}{*}{2.84} & \multirow{1}{*}{2.84} & \multirow{1}{*}{2.84} \\
& $x \mapsto \|T^1(x) - T^2(x)\|_2^2$ & 5.66 & 2.81 & 2.78 & 2.78\\
\hline
\end{tabular}
\label{table:benchcmp}
\end{center}
\end{table}
\subsection{Other properties} 
\label{benchsafe}
Here we consider the program given in Example~\ref{ex:testout}. One is interested in showing that the set $X_1$ of possible values taken by the variables of this program does not meet the ball $B$ of center $(-0.5, -0.5)$ and radius $0.5$.
\begin{example}
\label{ex:testout}
Let consider the PPS $\pws=(\xin,X^0,\{X^1,X^2\},\{T^1,T^2\})$ with $\xin = [0.5, 0.7] \times [0.5, 0.7] $, $X^0=\{x\in\rr^2\mid r^0(x)\leq 0\}$ with $r^0:x\mapsto -1$, $X^1=\{x\in\rr^2\mid r^1(x)\leq 0\}$ with $r^1:x\mapsto \| x \|_2^2-1$, $X^2=\{x\in\rr^2\mid r^2(x)\leq 0\}$ with $r^2=-r^1$ and $T^1:(x_1,x_2)\mapsto (x_1^2+ x_2^3,  x_1^3+  x_2^2)$, 
 $T^2:(x,y) \mapsto (0.5 x_1^3 + 0.4 x_2^2, - 0.6 x_1^2 + 0.3 x_2^2)$. With $\kappa : (x_1, x_2) \mapsto 0.25 - (x_1 + 0.5)^2 - (x_2 + 0.5)^2$, one has $B := \{ x \in \rr^2 \mid  0 \leq \kappa(x) \}$. Here, one shall prove $x \in \rea \implies \kappa(x) < 0$ while computing some negative $\alpha$ such that $\rea\subseteq \prop{\kappa, \alpha}$. Note that $\kappa$ is not a norm, by contrast with the previous examples.
\end{example}
At step $m = 3$ (resp.$m = 4$), Program~\eqref{polsynthesis} yields  a nonnegative solution $w_3$ (resp.~$w_4$). Hence, it does not allow to certify that $\rea\cap B$ is empty. This is illustrated in both Figure~\ref{fig:testout} (a) and Figure~\ref{fig:testout} (b), where the light grey region does not avoid the ball $B$. However, solving Program~\eqref{polsynthesis} at step $m = 5$ yields a negative bound $w_5$ together with a certificate that $\rea$ avoids the ball $B$ (see Figure~\ref{fig:testout} (c)). The corresponding values of $w_m$ ($m=3,4,5$) are given in Table~\ref{table:benchcmp}.

\begin{figure}[!ht]
\centering
\subfigure[$m=3$]{
\includegraphics[scale=\sizesmallfig]{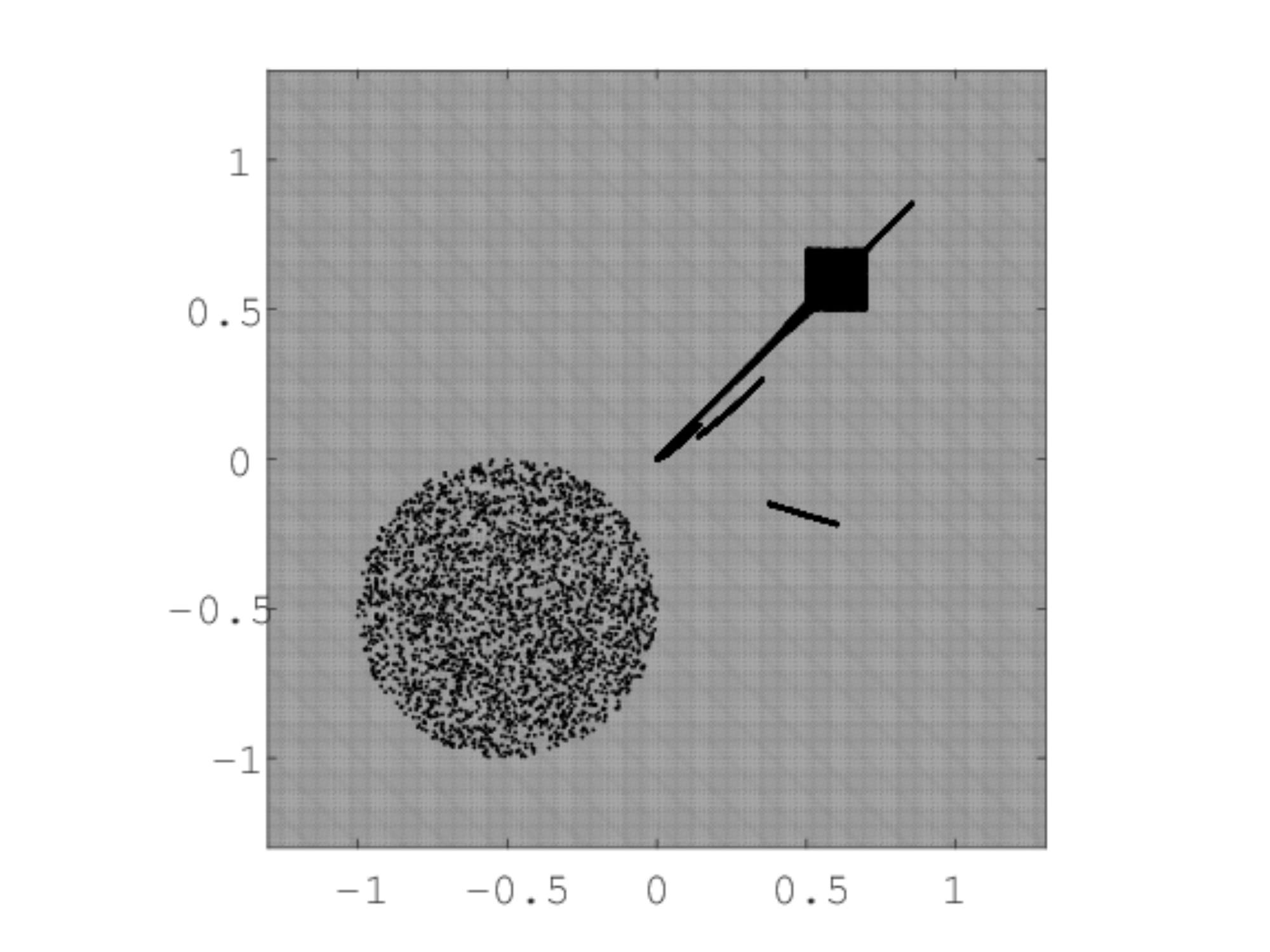}}
\subfigure[$m=4$]{
\includegraphics[scale=\sizesmallfig]{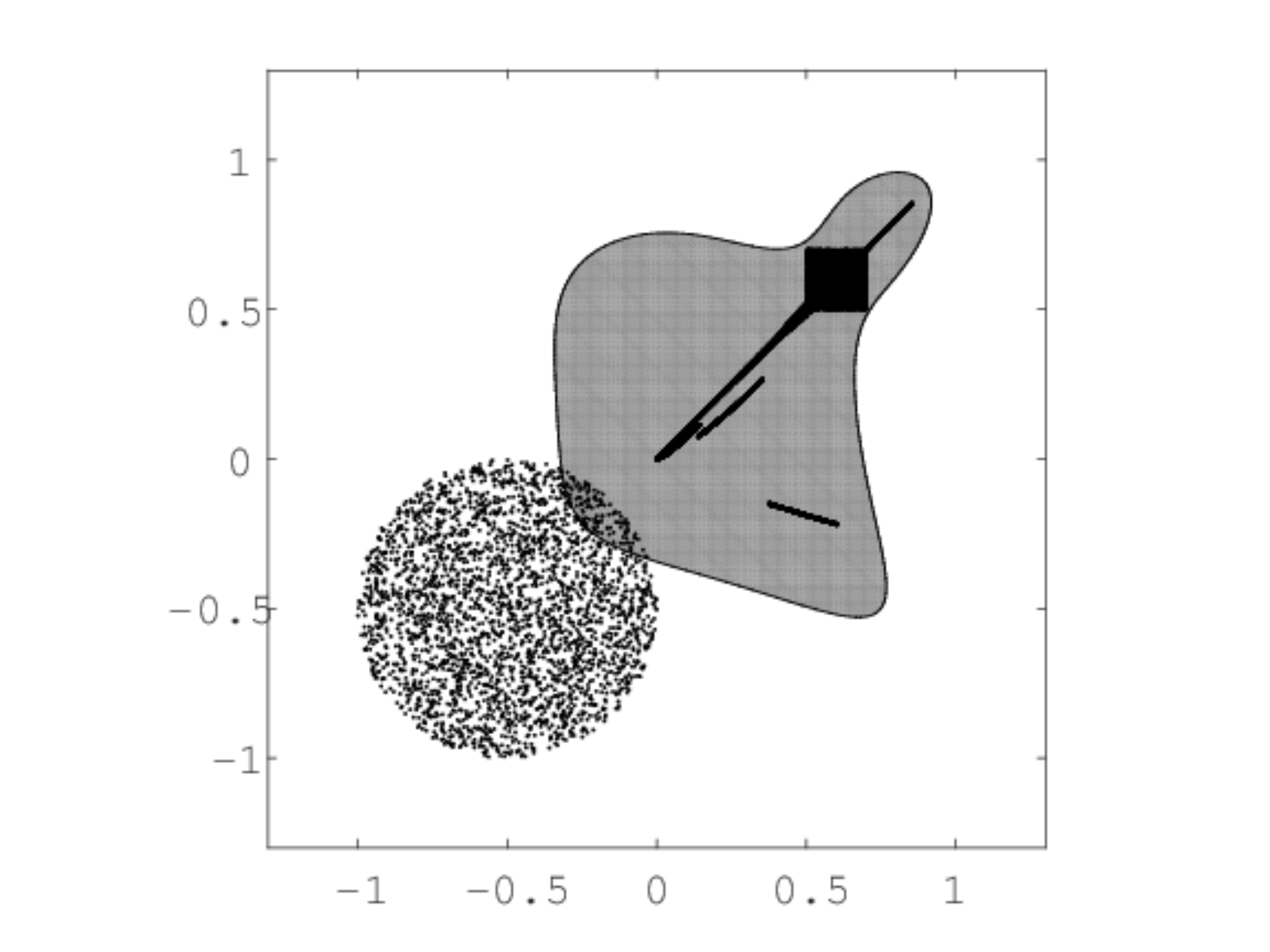}}
\subfigure[$m=5$]{
\includegraphics[scale=\sizesmallfig]{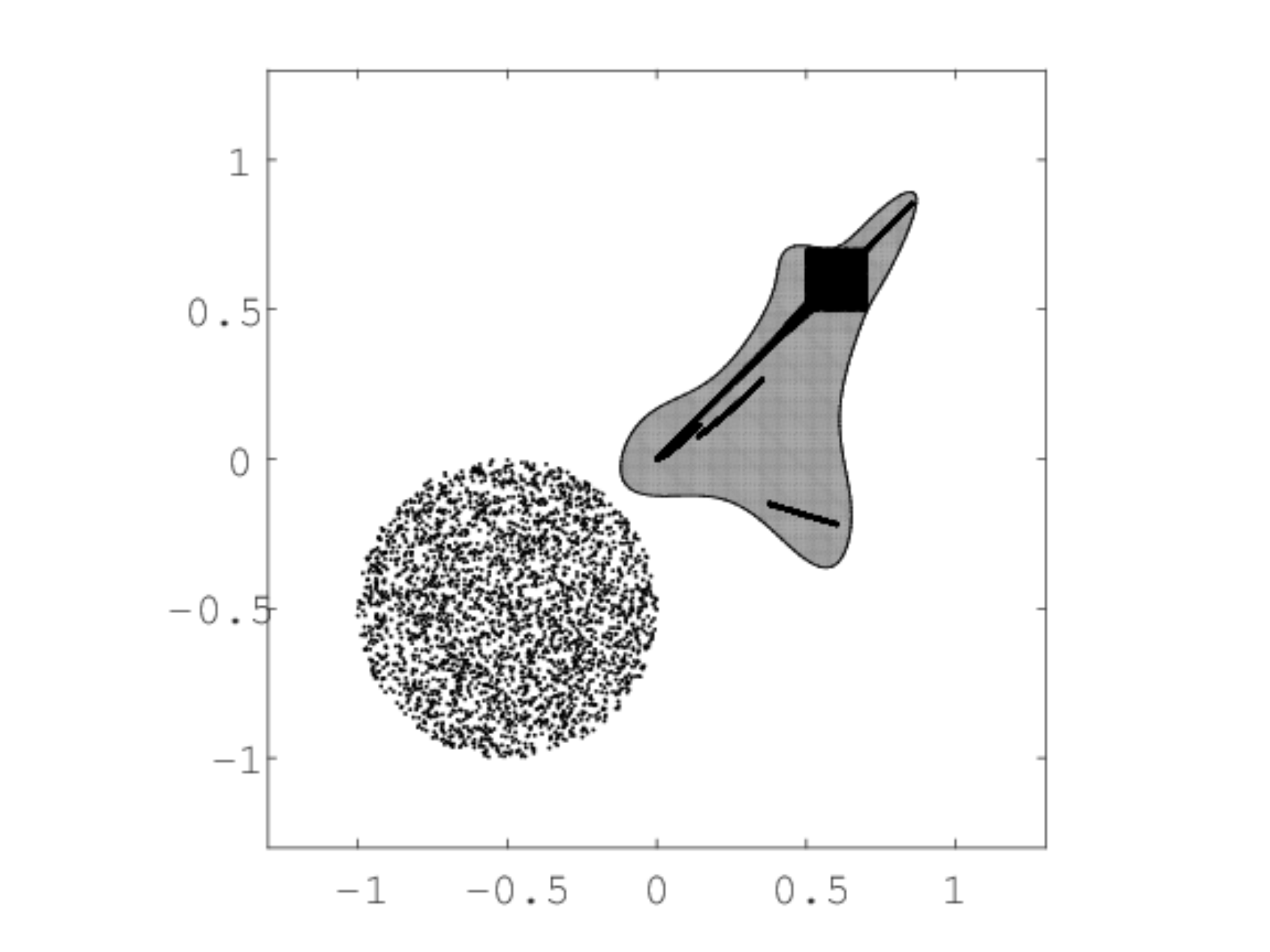}}
\caption{A hierarchy of sublevel sets $P_m$ for Example~\ref{ex:testout}}	
\label{fig:testout}
\end{figure}
Finally, one analyzes the program given in Example~\ref{ex:ahmadi3}.
\begin{example}(adapted from Example 3 in~\cite{ahmad13switched})
\label{ex:ahmadi3}

Let  $\pws$ be the PPS $(\xin,X^0,\{X^1,X^2\},\{T^1,T^2\})$ with $\xin = [-1, 1] \times [-1, 1] $, $X^0=\{x\in\rr^2\mid r^0(x)\leq 0\}$ with $r^0:x\mapsto -1$, $X^1=\{x\in\rr^2\mid r^1(x)\leq 0\}$ with $r^1:x\mapsto x_2 - x_1$, $X^2=\{x\in\rr^2\mid r^2(x)\leq 0\}$ with $r^2=-r^1$ and $T^1:(x_1,x_2)\mapsto (0.687 x_1 + 0.558 x_2 - 0.0001 * x_1 x_2, - 0.292 x_1 + 0.773 x_2)$,  
 $T^2:(x,y) \mapsto (0.369 x_1 + 0.532 x_2 - 0.0001 x_1^2, -1.27 x_1 + 0.12 x_2 -0.0001 x_1 x_2)$. We consider the boundedness property $\kappa_1 := \|\cdot\|_2^2$ as well as $\kappa_2 (x) := \|T^1(x) - T^2(x) \|_2^2$. The function $\kappa_2$ can be viewed as the absolute error made by updating the variable $x$ after a possibly ``wrong'' branching. Such behaviors could occur while computing wrong values for the conditionals (e.g. $r^1$) using floating-point arithmetics. Table~\ref{table:benchcmp} indicates the hierarchy of bounds obtained after solving Problem~\eqref{polsynthesis} with $m=3,4,5$, for both properties. The bound $w_5 = 2.84$ (for $\kappa_1$) implies that the set of reachable values may not be included in the initial set $\xin$. A valid upper bound of the error function $\kappa_2$ is given by $w_5 = 2.78$.
\end{example}

%


\section{Templates bases}
\label{sec:templates}
We finally present further use of the set $P$ defined at Equation~\eqref{eqfondamentale}. This sublevel set can be viewed as a template abstraction, following from the definition in~\cite{DBLP:journals/corr/abs-1111-5223}, with a fixed template basis ${p}$ and an associated $0$ bound.
This representation allows to develop a policy iteration algorithm~\cite{pisos} to obtain more precise inductive invariants.

We now give some simple method to complete this template basis to improve the precision of the bound $w$ found with 
Problem~\eqref{polsynthesis}.

\begin{proposition}[Template basis completions]
\label{onlyone}
Let $(p,w)$ be a solution of Problem~\eqref{polsynthesis} and 
$\mathcal{Q}$ be a finite subset of $\rr[x]$ such that for all $q\in\mathcal{Q}$, $p-q\in\Sigma[x]$.
Then $\rea\subseteq \{x\in\rd \mid p(x)\leq 0,\ q(x)\leq 0,\ \forall\, q\in  \mathcal{Q}\}\subseteq \prop{\kappa,w}\subseteq\prop{\kappa,\alpha}$ and $\{x\in\rd \mid p(x)\leq 0,\ q(x)\leq 0,\ \forall\, q\in  \mathcal{Q}\}$ is an inductive invariant.
\end{proposition}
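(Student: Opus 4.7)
The plan is to observe that the hypothesis $p - q \in \Sigma[x]$ for every $q \in \mathcal{Q}$ forces the augmented sublevel set to coincide with the original one, after which the conclusion reduces to what has already been established.

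First I would unpack the SOS hypothesis: for each $q \in \mathcal{Q}$, the polynomial $p - q$ being SOS implies $p(x) - q(x) \geq 0$ for all $x \in \rd$, i.e.\ $q(x) \leq p(x)$ pointwise. Consequently, whenever $p(x) \leq 0$ we automatically get $q(x) \leq 0$ for every $q \in \mathcal{Q}$. Letting
\[
P := \{x\in\rd \mid p(x)\leq 0\}, \qquad P_\mathcal{Q} := \{x\in\rd \mid p(x)\leq 0,\ q(x)\leq 0,\ \forall\, q\in  \mathcal{Q}\},
\]
this gives the inclusion $P \subseteq P_\mathcal{Q}$, and the reverse inclusion $P_\mathcal{Q} \subseteq P$ is trivial from the definition of $P_\mathcal{Q}$. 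Hence $P = P_\mathcal{Q}$.

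Next I would invoke Proposition~\ref{sosproposition}: since $(p,w)$ is feasible for Problem~\eqref{polsynthesis} (and implicitly satisfies $w \leq \alpha$, otherwise there is nothing to state), we already know that $\rea \subseteq P \subseteq \prop{\kappa,w} \subseteq \prop{\kappa,\alpha}$, and that $P$ is an inductive invariant in the sense of Equation~\eqref{eqinvarianttemplates} (via Lemma~\ref{lemma:invsufficient}). Substituting $P_\mathcal{Q}$ for $P$ using the equality established above yields the desired chain of inclusions and the inductiveness of $P_\mathcal{Q}$.

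There is essentially no obstacle here: the entire content of the proposition is the pointwise domination $q \leq p$ implied by $p - q \in \Sigma[x]$, which collapses the new constraints to redundant ones. The only thing worth stating carefully is that this completion procedure does not degrade the bound $w$ nor the inductiveness property, precisely because the feasible set of $P_\mathcal{Q}$ is unchanged.
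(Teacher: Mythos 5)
Your proof is correct, but it is organized differently from the paper's. You observe that $p-q\in\Sigma[x]$ gives the pointwise bound $q\leq p$, hence $\{x\in\rd\mid p(x)\leq 0\}\subseteq\{x\in\rd\mid p(x)\leq 0,\ q(x)\leq 0,\ \forall q\in\mathcal{Q}\}$, so the two sets are \emph{equal}, and every property already known for $P$ (the chain of inclusions from Proposition~\ref{sosproposition} and Lemma~\ref{lemma:invsufficient}, and the inductiveness $F(P)\subseteq P$) transfers verbatim. Since ``inductive invariant'' is defined in the paper as a property of the set, not of its presentation, this collapse argument is legitimate and is the shortest route. The paper instead keeps only the easy inclusion $Q\subseteq P$ and re-verifies the two conditions of Equation~\eqref{eqinvarianttemplates} for each new inequality separately: writing $p-q=\psi_q\in\Sigma[x]$, it checks $q(x)\leq p(x)\leq 0$ on $\xin$ and $q(T^i(x))\leq p(T^i(x))\leq p(x)\leq 0$ on $Q\cap X^i\cap X^0$ for each $i\in\ind$. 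The ingredients are the same (pointwise domination plus the feasibility of $(p,w)$ for Problem~\eqref{polsynthesis}), but the paper's inequality-by-inequality verification produces an explicit certificate for each template $q$ in the completed basis, which is what matters for the intended downstream use (initializing policy iteration on the template basis $\{p\}\cup\mathcal{Q}$), whereas your argument establishes the set-theoretic statement without exhibiting those per-template certificates. One small point worth making explicit in your write-up: the inclusion $\prop{\kappa,w}\subseteq\prop{\kappa,\alpha}$ requires $w\leq\alpha$ (or $w<\infty$ when $\alpha=\infty$), a hypothesis the proposition leaves implicit and which you correctly flag in passing.
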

\begin{proof}
Let $Q$ be the set $\{x\in\rd \mid p(x)\leq 0,\ q(x)\leq 0,\ \forall\, q\in  \mathcal{Q}\}$.
It is obvious that $Q\subseteq P=\{x\in\rd \mid p(x)\leq 0\}$ and hence $Q\subseteq \prop{\kappa,w}$. 
Now let us prove that $Q$ is an inductive invariant. We have to prove that $Q$ satisfies Equation~\eqref{eqinvarianttemplates} 
that is: 
\begin{inparaenum}[(i)]
\item For all $x\in\xin$, $q(x) \leq 0$; 
\item For all $i\in\ind$, for all $x\in Q \cap X^i\cap X^0$,  $q(T^i(x))\leq 0$.
\end{inparaenum} 
For all $q\in\mathcal{Q}$, we denote by $\psi_q$ the element of $\Sigma[x]$ such that $p-q=\psi_q$.  
Let us show (i) and let $x\in\xin$. We have $q(x)=p(x)-\psi_q(x)$ and since $\psi_q\in\Sigma[x]$, we obtain, 
$q(x)\leq p(x)$. Now from Proposition~\ref{sosproposition} and 
Lemma~\ref{lemma:invsufficient} and since $(p,w)$ is a solution of Problem~\eqref{polsynthesis}, we conclude that $q(x)\leq p(x)\leq 0$.

Now let us prove (ii) and let $i\in\ind$ and $x\in Q\cap X^i\cap X^0$. We get $q(T^i(x))=p(T^i(x))-\psi_q(T^i(x))$ and 
since $\psi_q\in\Sigma[x]$, we obtain $q(T^i(x))\leq p(T^i(x))$. Using the fact that $(p,w)$ is a solution of Problem~\eqref{polsynthesis} and using Proposition~\ref{sosproposition} and Lemma~\ref{lemma:invsufficient}, we obtain $q(T^i(x))\leq p(T^i(x))\leq p(x)$. Since
$x\in Q\subseteq P=\{y\in\rd \mid p(y)\leq 0\}$, we conclude that $q(T^i(x))\leq 0$.
\end{proof}
Actually, 
we can weaken the hypothesis of Proposition~\ref{onlyone} to construct an inductive invariant. Indeed, after the computation of $p$ following 
Problem~\eqref{polsynthesis}, it suffices to take a polynomial $q$ such that $p-q\geq 0$. Nevertheless, we cannot compute easily such a polynomial $q$.
By using the hypothesis $p-q\in\Sigma[x]$, we can compute $q$ by sum-of-squares. Proposition~\ref{onlyone} allows to define a simple method to construct a basic semialgebraic inductive invariant set. Then the polynomials describing this basic semialgebraic set defines a new templates basis and this basic semialgebraic set can be used as initialisation of the policy iteration algorithm developed in~\cite{pisos}. Note that the link between the templates generation and the initialisation of policy iteration has been addressed in~\cite{aadje_nsv}.

\begin{example}
  Let us consider the property $\prop{\|\cdot\|_2^2,\infty}$ and let $(p,w)$ be
  a solution of Problem~\eqref{polsynthesis}.  We have $\kappa(x) = \sum_{1
      \leq j \leq k} x_j^2$ and $w + p -\kappa=\psi$ where $\psi\in\Sigma[x]$. In~\cite{DBLP:conf/hybrid/RouxJGF12}, the templates basis used to compute bounds on the reachable values set consists in the square variables plus a Lyapunov function. Let us prove that, in our setting, $\mathcal{Q}=\{x_k^2-w,\ k=1,\ldots,d\}$ can complete $\{p\}$ in the sense of Proposition~\ref{onlyone}. Let $k\in\{1,\ldots,d\}$ and let
  $x\in\rd$, $p(x)-(x_k^2-w)=p(x)-\kappa(x)+w+\sum_{j\neq k}
  x_j^2=\psi(x)+\sum_{j\neq k} x_j^2\in\Sigma[x]$.

\end{example}
\comment{
\assale{Je vais reformuler le problème, en expliquant que si on a $p\geq q$ on ne pourra pas vérifier numériquement 
que $\{x\mid p(x)\leq 0, q(x)\leq 0 \}$ est un invariant inductif alors qu'avec $p-q\in \Sigma[x]$, on peut vérifier 
par la SOS que $\{x\mid p(x)\leq 0, q(x)\leq 0 \}$ est un invariant inductif car l'équation que j'écris juste après est vrai.e.
En fait l'équation de juste après donne les arguments de preuve pour que $\{x\mid p(x)\leq 0, q(x)\leq 0 \}$ soit un invariant inductif.}
One can think that it suffices to consider sets $\mathcal{Q}$ for which $p\geq q$ for all $q\in\mathcal{Q}$. However, in the policy iteration algorithm developed in~\cite{pisos}, the construction of an initial policy relies on the fact that 
for all $i\in\ind$, there exist $\nu^i\in\Sigma[x]^{n_i}$ and $\tau^i\in\Sigma[x]^{n_0}$ such that 
$p-q \circ T^i-\sum_{j=1}^{n_i}\nu_j^i r_j^i-\sum_{j=1}^{n_0}\tau_j^i r_j^0\in\Sigma[x]$. 
The latter condition cannot be guaranteed only with $p-q\geq 0$. 

Now, let $i\in\ind$ and $q\in\rr[x]$ such that $p-q\in\Sigma[x]$. This implies that $(p-q)\circ T^i\in\Sigma[x]$ and using the fact that $p-q \circ T^i=p-p\circ T^i+(p-q)\circ T^i$, we obtain:  
 \[
 \displaystyle{p-q \circ T^i-\sum_{j=1}^{n_i}\mu_j^i r_j^i-\sum_{j=1}^{n_0}\gamma_j^i r_j^0=\sigma^0+(p-q)\circ (T^i)\in\Sigma[x]}\enspace ,
 \]
where $\mu^i$ and $\gamma^i$ are the sums-of-squares vectors provided by the computation of $p$ from Problem~\eqref{polsynthesis}.

\begin{proof}
Suppose that $\{p\}$ is $\K$ \well w.r.t. $\prop{\kappa}$. 
By definition, there exists $w\in\rr$, $\alpha\in\rr$ and $\nu\in\K$ and for all $i\in\ind$, $\lambda^i\in\K$, $\mu^i\in \K^{1\times n_i}$, $\gamma^i,\in\K^{1\times n_0}$, $\nu\in\K$ such that the functions for all $i\in\ind$, $S^i := S_1^i$, $S^\kappa$
belong to $\K$ ($S_1^i\in\K$ and $S^\kappa$ defined at Equation~\eqref{auxiliaryineq}) and $w\geq \sup\{p(x)\mid x\in \xin\}$. Let us take $q$ such that $p-q\in\K$. It follows that $p\geq q$ and thus:
$w\geq \sup\{p(x)\mid x\in \xin\} \geq \sup\{q(x)\mid x\in \xin\}$. Now let $i\in\ind$, since $(p-q)\circ T^i\in \K$ then there exists $f\in\K$ such that $f(x)=p(T^i(x))-q(T^i(x))$ for all $x\in\rd$, we have $w(1-\lambda^i(x))-q(T(x))+\lambda^i(x)p(x)+\sum_{j=1}^{n_i}\mu_j^i(x) 
r_j^i(x)+\sum_{j=1}^{n_0}\gamma_j^i(x)r_j^0(x)=S^i(x)+ f(x)$ for all $x\in\rd$. Since $\K$ is closed under addition then $S^i+ f\in \K$. Now $S^\kappa\in\K$ implies that $S^\kappa+0(w-q)\in\K$.
It follows that $\{p,q\}$ is $\K$ \well w.r.t. $\pws$ and $\prop{\kappa}$ by taking
$(w,w)\in\rr^2$, $\alpha\in\rr$, $(\nu,0)\in\K^2$ and for all $i\in\ind$, $\{(\lambda^i,0),(\lambda^i,0)\}\in\K^{2\times 2}$, 
$(\mu^i,\mu^i)\in\K^{2\times n_i},(\gamma^i,\gamma^i)\in\K^{2\times n_0}$ (following
the order of the parameters of Definition~\ref{kwell}). We conclude by induction on the elements $q$.\qed
\end{proof}

Another possibility consists in constructing a template basis w.r.t. from a vector of templates 
$p_1, \dots, p_k$ such that for all $i = 1, \dots, k$, $(p_i,w_i)$ is a solution of Problem~\eqref{polsynthesis}.
\begin{proposition}[From two solutions]
\label{twowell}
Let $(p,w_1)$ and $(q,w_2)$ two solutions of Problem~\eqref{polsynthesis}. Then $\rea\subseteq \{x\in\rd \mid p(x)\leq 0,\ q(x)\leq 0\}\subseteq \prop{\kappa,(w_1+w_2)/2}\subseteq\prop{\kappa,\alpha}$ and $\{x\in\rd \mid p(x)\leq 0,\ q(x)\leq 0\}$ is an inductive invariant.
\end{proposition}

\begin{proof}
By induction, it suffices to prove the result for $\mathcal{Q}=\{q\}$. We write
$p_1=p$ and $p_2=q$. By definition, for $l=1,2$, there exist $w_l\in\rr$, $\alpha_l\in\rr$, $\nu\in\K^k$ and for all $i\in\ind$ $\lambda_l^i\in \K,\mu_l^i\in\K^{1\times n_i},\gamma_l^i\in\K^{1\times n_0}$ such that
$S_{l}^i, S_{l}^\kappa\in\K$ ($S_l^i\in\K$ and $S^\kappa$ defined at Equation~\eqref{auxiliaryineq}) and 
$w_l\geq \sup\{p_l(x)\mid x\in \xin\}$. It follows that $\{p,q\}$ is $\K$ \well w.r.t. $\pws$ and $\prop{\kappa}$ by taking
$(w_1,w_2)\in\rr^2$, $\alpha=(\alpha_1+\alpha_2)/2\in\rr$, ($\nu_1/2,\nu_2/2)\in\K^2$ and for all $i\in\ind$, $\{(\lambda_1^i,0),(0,\lambda_2^i)\}\in\K^{2\times 2}$, 
$(\mu_1^i,\mu_2^i)\in\K^{2\times n_i},(\gamma_1^i,\gamma_2^i)\in\K^{2\times n_0}$ (following
the order of the parameters in Definition~\ref{kwell}). 
To conclude, we use the fact that $\K$ is closed under nonnegative scalar multiplications. \qed
\end{proof} 
}


\section{Related works and conclusion}

Roux et al.~\cite{DBLP:conf/hybrid/RouxJGF12} provide an automatic method to compute floating-point
certified Lyapunov functions of perturbed affine loop body updates. They use Lyapunov functions with 
squares of coordinate functions as quadratic invariants in case of single loop programs written in 
affine arithmetic. 
In the context of hybrid systems, certified inductive invariants can be computed by 
using SOS approximations of parametric polynomial optimization problems~\cite{Lin14hybrid}.  
In \cite{Prajna04hybrid}, the authors develop a SOS-based methodology to certify that the trajectories of hybrid systems avoid an unsafe region.

In the context of static analysis for semialgebraic programs, the approach developed in~\cite{CousotSDP} focuses on inferring valid loop/conditional invariants for semialgebraic programs\footnote{This approach also handles semialgebraic program termination}. This approach relaxes an invariant generation problem into the resolution of nonlinear matrix inequalities, handled with semidefinite programming.
Our method bears similarities with this approach but we generate a hierarchy of invariants (of increasing degree) with respect to target polynomial properties and restrict ourselves to linear matrix inequality formulations.
In~\cite{BagnaraR-CZ05}, invariants are given by polynomial inequalities (of bounded degree) but the method relies on a reduction to 
linear inequalities (the polyhedra domain).  
Template polyhedra
domains allow to analyze reachability for polynomial systems: 
in~\cite{sassi2012reachability}, the authors propose a method that computes linear templates to improve the accuracy 
of reachable set approximations, whereas the procedure in~\cite{dang2012reachability} relies on Bernstein polynomials and linear programming, with linear templates being fixed in advance. Bernstein polynomials also appear in ~\cite{polynomial_template_domain} as polynomial templates but they are not generated automatically. In~\cite{gulwani}, the authors use SMT-based techniques to automatically generate templates which are defined as formulas built with arbitrary logical structures and predicate conjunctions.
Other reductions to systems of polynomial {\em equalities} (by contrast with polynomial inequalities, as we consider
here) were studied in \cite{Muller,Kapur} and more recently in~\cite{cachera2014inference}.  

In this paper, we give a formal framework to relate the invariant generation 
problem to the property to prove on analyzed program. 
We proposed a practical method to compute such invariants in the case of polynomial arithmetic using sums-of-squares programming. This method is able to handle non trivial examples, as illustrated through the numerical experiments.
Topics of further investigation include refining the invariant bounds generated for 
a specific sublevel property, by applying the policy iteration algorithm. Such a refinement would be of particular interest if one can not decide whether the set of variable values avoids an unsafe region when the bound of the corresponding sums-of-squares program is not accurate enough. For the case of boundedness property, it would allow to decrease the value of the bounds on the variables.
Finally, our method could be generalized to a larger class of programs, involving semialgebraic or transcendental assignments, while applying the same polynomial reduction techniques as in~\cite{AGMW14nltemplates}.
\bibliographystyle{alpha}
\bibliography{sas15mainbib}
\end{document}